\newtheorem{corollary}{Corollary}
\newtheorem{lemma}{Lemma}
\newtheorem{theorem}{Theorem}
\begin{document}


\title{Some Entanglement Survives Most Measurements}

\author{Alvin Gonzales$^{1,2}$}
\email{agonza@siu.edu}

\author{Daniel Dilley$^3$}

\author{Mark Byrd$^{2,3}$}
\affiliation{$^1$Intelligence Community Postdoctoral Research Fellowship Program, Argonne National Laboratory, 9700 S. Cass Avenue, Lemont, Illinois, 60439, USA}
\affiliation{$^2$School of Computing, Southern Illinois University Carbondale, Carbondale, Illinois, 62901, USA}
\affiliation{$^3$School of Physics and Applied Physics, Southern Illinois University Carbondale, Carbondale, Illinois, 62901, USA.}
\date{\today}

\begin{abstract}

To prepare quantum states and extract information, it is often assumed that one can perform a perfectly projective measurement. Such measurements can achieve an uncorrelated system and environment state. However, perfectly projective measurements can be difficult or impossible to perform in practice. We investigate the limitations of repeated non-projective measurements in preparing a quantum system. For an $n$-qubit system initially entangled with its environment and subsequently prepared with measurements, using a sequence of weak measurements, we show that some entanglement remains unless one of the measurement operators becomes perfectly projective through an extreme limiting process. Removing initial (unentangled) correlations between a system and its environment and the scenario where measurement outcomes are not tracked are also discussed. We present results for $n$-qubit and $n$-dimensional input states.

\end{abstract}

\maketitle

{\it Introduction}.--
Quantum information processing theory relies on the supposition of perfectly projective measurements in many instances.  They are used for state preparation, error correction and generally for defining completely positive trace-preserving maps \cite{Nielsen_Chuang_Textbook_2011}.  However, they have been under scrutiny recently for the thermodynamics's costs of the implementation \cite{Guryanova_2020IdealProjMeasHaveInfiniteResourceCost,Guryanova_2020IdealProjMeasHaveInfiniteResourceCost,Masanes_2017AGenDerivAndQuantOfThe3rdLawOfThermo, Taranto_2021LandauerVSNernstWhatIsTheTrueCostOfCoolingAQuantSys}.  This begs the questions, what can be done in practice?  What are the implications of restriction to measurements that are not perfectly projective?

For state preparation and error correction in quantum computing, the threshold theorem states that we do not require perfect performance from any component to find the correct results with very high probability \cite{Aharaonov/Ben-Or:Threshold,Kitaev1997, knill_1996ThresholdAccurForQuantComp}.  However, much of our theory relies on our assumed perfection of measurements and with the recent thermodynamic concerns, it is important to try to understand the practicality and implications.  For example, initial system-environment correlations cause complications when defining a map for the system evolution \cite{Pechukas_1994NCP, Pechukas_1995, Alicki_1995, Jordan_Shaji_Sudarshan_2004_DynOfInitialEntangledOpenQS, Shaji_Sudarshan_2005_whosAfraidOfNCPMaps} and can result in temporally correlated errors and/or non-Markovian evolutions that can imply restrictions on error correction methods \cite{Milz_2019CPDivisDoesNotMeanMarkov, Pollock_2018NonMarkovQuantProcCompFrameAndEffCharac, Huo_2021SelfConstTomogOfTemporCorrErr,  shabani_2009MapsForGeneralOQSAndTheoryOfLinQEC, Gonzales_2020SuffCondAndConstForReversGenQuantErrs, Gonzales_2020_ErratumSuffCondsAndConstForReverGenQuantErrs}. So, does a non-perfectly projective measurement leave a state entangled, or otherwise correlated with other states?  In particular, can we decouple qubits from each other and/or from the environment?  

Here, these questions are addressed using non-perfect measurements that are modeled as weak measurements and the strength is increased until the measurement is “almost perfect.”  
Interesting scenarios arise when measurement outcomes are not recorded. These cases are discussed and distinguished here so that some practical modeling and the implications of imperfectly measured states can be explored in a rather general way.  This, of course, has implications for the control of quantum systems and their utility as well as broader implications of simply having to sometimes live with some remaining correlations in particular circumstances.  

This investigation begins by defining and discussing the basic properties of entanglement non-annihilating trace preserving (ENATP) maps. ENATP maps are trace preserving maps that cannot bring to zero the entanglement between the two subspaces of \textit{any} bipartite input state. For convenience, we generally label the two subspaces of the state as the system and environment. ENATP maps were previously investigated in \cite{Lami_2016EntanglementSavingChannels}. In contrast to \cite{Lami_2016EntanglementSavingChannels}, the ENATP maps considered here can be nonlinear due to the renormalization factor for recorded measurement outcomes \cite{Filipov_2021TraceDecreasingMaps}. Next, we consider the effects of non-projective measurements on the initial entanglement of bipartite states.

We identify a disentangling criteria and prove that breaking entanglement via non projective measurements is difficult to attain. Our first result concerns $n$-dimensional input states. It is shown that any finite sequence of full rank measurements with recorded outcomes form an ENATP map. A similar result applies for $n$-qubit input states where the local measurements are full rank.
Thus, a finite sequence of weak measurements \cite{Oreshkov_2005WeakMeasAreUniversal, Brun_2002ASimpleModelOfQuantTraj} with recorded outcomes also form an ENATP map. Weak measurements are general since they can generate any abitrary measurement \cite{Oreshkov_2005WeakMeasAreUniversal}. Consequently, breaking entanglement between an $n$-qubit system and its environment via single qubit measurements with recorded outcomes requires a perfect projective operation.

Our analysis is extended to annihilating initial correlations. We define correlation non-annihilating trace preserving (CNATP) maps as trace preserving maps that cannot bring to zero the correlations between the two subspaces of \textit{any} bipartite input state. Note that correlations in a separable system and environment state can also cause difficulties in describing the evolution of the system \cite{Rodriguez-Rosario_2008_CPMapsAndClassicalCorr}.

Next, when measurement outcomes are not tracked, our ignorance restricts the density matrix that we can assign to the post measurement state to be the convex combination of all the possible outcomes. For an initially entangled state, this convex combination can be unentangled. However, this is a result of our ignorance because all of the possible outcomes are entangled. Multiple examples are provided throughout the text. We conclude with a discussion.

{\it Background}.--\label{sec:background}
We first provide useful definitions and the relation between ENATP maps and entanglement breaking (EB) channels \cite{Holevo_1998QuantumCodingTheorems, Horodecki_2003EntanglementBreakingChannels, Ruskai_2003QubitEntBreakingChannels, De_Pasquale_2012QauntTheNoiseOfAQuantChannelByNoiseAdd, De_Pasquale_2012Erratum:QuantTheNoiseOfAQuantChannByNoiseAdd, Christandl_2019WhenDoComposedMapsBecomeEB}. Note that in these discussions the entanglement considered is partitioned on a bipartite state and it is convenient to think of it as the entanglement between the system and environment. An entanglement breaking channel \cite{Holevo_1998QuantumCodingTheorems, Horodecki_2003EntanglementBreakingChannels, Ruskai_2003QubitEntBreakingChannels, De_Pasquale_2012QauntTheNoiseOfAQuantChannelByNoiseAdd, De_Pasquale_2012Erratum:QuantTheNoiseOfAQuantChannByNoiseAdd, Christandl_2019WhenDoComposedMapsBecomeEB} acts on one half of the system and environment state and breaks the entanglement for \textit{any} input state.  

A closely related concept is that of entanglement-annihilating channels, which considers maps acting on both the system and environment \cite{Filippov_2012LocalTwo-qubitEntAnnihilatingChannels, Filippov_2013BipartiteEntAnnihilatingMapsIFFConds, Filippov_2021TensorProductsOfQuantumMaps}.
A subset of EB channels are called coherence-breaking channels \cite{Bu_2016CoherenceBreakingChannelsAndCoherenceSuddenDeath}, which are channels that break coherence for all input states. Some other similar types of channels are partially entanglement-breaking channels, which are categorized based on the Schmidt rank of the output states \cite{Chruscinski_2006OnPartialEntangleBreakingChannels}, and steerability breaking channels, which are channels that break the steerability of any input state \cite{ku_2022QuantifyingQuantOfChannelsWithoutEntanglement}. In \cite{xu_2022WhatHappensToMultiEntIfOnePartBecomesClass}, the effect of an entanglement-breaking channel on the entanglement of a tripartite state is analyzed.

ENATP and CNATP maps are in some sense opposite of entanglement breaking maps. ENATP (CNATP) maps do not break the entanglement (correlations) between  \textit{any} system and environment state. Note that an $n$-qubit ENATP (CNATP) map is a ENATP (CNATP)
map over $n$-qubit input states.

For this work, the definition of a weak measurement from  \cite{Oreshkov_2005WeakMeasAreUniversal} is used.  A measurement is weak if \textit{all} the outcomes result in very little change to the original state.  The general form of the weak measurement operator is taken to be 
    \begin{align}\label{eq:generalWeak}
        M =q(\mathbb{I}+\hat{\epsilon}),
    \end{align}
where $0\leq q \leq 1$ and $\hat{\epsilon}$ has small norm $\|\hat{\epsilon}\|\ll 1$ \cite{Oreshkov_2005WeakMeasAreUniversal, Brun_2002ASimpleModelOfQuantTraj}. It is known that weak measurements can model any measurement to arbitrary precision \cite{Bennett_1999QuantumNonlocalityWithoutEntanglement, Oreshkov_2005WeakMeasAreUniversal}. Weak measurements have previously been used to detect initial entanglement in quantum batteries \cite{Imai_2023WorkFluctAndEntInQuantBatteries}.

For the entanglement measure we utilize Wootters' concurrence \cite{Wooters_1998EntOfFormationOfAnArbitraryStateOfTwoQubits}
\begin{align}
    C(\rho)=\text{max}\{0,\sqrt{\lambda_1}-\sqrt{\lambda_2}-\sqrt{\lambda_3}-\sqrt{\lambda_4}\},
\end{align}
where the $\lambda_i$ are the eigenvalues in decreasing order of the matrix $\zeta=\rho\tilde\rho$ and 
\begin{align}
    \tilde\rho=(\sigma_y\otimes\sigma_y)\rho^*(\sigma_y\otimes\sigma_y).
\end{align}

We also rely on entanglement witnesses, which are Hermitian operators $W$ that have non negative expectation values for separable states, but has a negative expectation value for at least one entangled state \cite{TERHAL_2000BellIneqAndTheSepCrit, Lewenstein_2000OptimizationOfEntWitness, sperling_2011TheSchmidtNumAsAUnivEntMeas}.

{\it Results}.-- In the following results we assume that the system and environment partitioning is constant. The entanglement and correlations discussed are across the system and environment. The first result concerns the composition of maps, which we use extensively throughout the paper.
\begin{lemma}\label{lemma:compositionsENATP}
    The composition $\mathcal{E}_1\circ\mathcal{E}_2\circ\cdots\circ\mathcal{E}_n$ consisting of $n$ ENATP (CNATP) maps $\mathcal{E}_i$, also forms an ENATP (CNATP) map.
\end{lemma}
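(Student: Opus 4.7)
The plan is to prove the statement by a straightforward induction on $n$, using only the definition of an ENATP (respectively, CNATP) map. Since the ENATP property is a pointwise statement about every bipartite input state rather than a structural property requiring linearity, the nonlinearity of the component maps (coming from the renormalization associated with recorded outcomes) is not an obstacle.

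The base case $n=1$ is immediate from the definitions. For the inductive step, I would assume that $\mathcal{F} := \mathcal{E}_2 \circ \cdots \circ \mathcal{E}_n$ is ENATP, and then consider an arbitrary bipartite state $\rho_{SE}$ with nonzero entanglement across the system--environment cut. The idea is to apply the definition in two stages: first, because $\mathcal{F}$ is ENATP, the state $\rho'_{SE} := \mathcal{F}(\rho_{SE})$ is trace-normalized and satisfies $E(\rho'_{SE}) > 0$, where $E$ denotes any faithful entanglement measure (for example, concurrence, as used in the paper). Second, since $\mathcal{E}_1$ is itself ENATP and $\rho'_{SE}$ is an admissible bipartite input with nonzero entanglement, the state $\mathcal{E}_1(\rho'_{SE}) = (\mathcal{E}_1 \circ \mathcal{F})(\rho_{SE})$ is again trace-normalized and has nonzero entanglement. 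Trace preservation of the composition follows from trace preservation of each factor. This establishes the ENATP case.

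The CNATP case is proved by the identical induction, with ``entanglement'' replaced by ``correlations'' (measured, for example, by the quantum mutual information $I(S:E)$, which is nonzero precisely for correlated states). The only thing to verify is that the fixed bipartition used in defining ENATP/CNATP is consistent with the bipartition assumed at each stage of the composition; this is automatic because the system--environment split is held constant throughout, as stated in the preamble to the results section.

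I do not expect any serious obstacle. The only subtlety worth flagging in the write-up is that, because the maps may be nonlinear, one cannot invoke general structure theorems (such as Kraus-type arguments); but the inductive argument above is purely in terms of the defining pointwise property and so is unaffected. The lemma therefore reduces to chaining the definition, and its real value lies in enabling the later arguments that decompose a sequence of weak measurements into individually ENATP rank-two pieces.
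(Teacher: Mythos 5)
Your proposal is correct and is essentially the paper's own argument: the paper simply observes that the maps can be applied sequentially, so each stage preserves nonzero entanglement (correlations), which is exactly your induction spelled out. The added care about nonlinearity and the fixed bipartition is sound but not a departure from the paper's route.
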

\begin{proof}
    The individual maps can be considered to be performed sequentially and hence the composition is also ENATP. From the same arguments, the same results hold for CNATP maps.
\end{proof}
When restricted only to ENATP (CNATP) maps, an infinite sequence of ENATP (CNATP) maps is necessary to bring the entanglement (correlations) to zero. A consequence of Lemma \ref{lemma:compositionsENATP} is that compositions of ENATP maps which act non trivially on different spaces are also ENATP maps.

The following theorem shows the difficulty of breaking entanglement with measurements in experiments. 

\begin{theorem}\label{thm:ddimensionalENATP}
    Let the bipartite initial state consist of an $m$ dimensional system, $n-m$ dimensional environment, and be entangled. Some entanglement remains across the system and environment after an arbitrary finite number of local measurements with known outcomes provided that the local measurement operators are full rank.
\end{theorem}
\begin{proof}
    Let $\rho$ be the initial entangled state and $\rho'=(M_S\otimes M_E)\rho (M_S^\dagger\otimes M_E^\dagger)$ be the unormalized post measurement state. Following a similar proof in \cite{sperling_2011TheSchmidtNumAsAUnivEntMeas}, we can use a witness $W$ to show that $\rho'$ is entangled.

    For any separable state $\sigma$, $\tr(\sigma W)\geq0$. Since $\sigma'=M_S^{-1}\otimes M_E^{-1}\sigma (M_S^{-1}\otimes M_E^{-1})^\dagger$ is an unormalized separable state,
    \begin{align}
        &\tr(\sigma'W)\geq0\\
        &\notag\rightarrow\\
        &\tr(\sigma [(M_S^\dagger)^{-1}\otimes (M_E^\dagger)^{-1}]W(M_S^{-1}\otimes M_E^{-1}))\geq 0
    \end{align}
    and 
    \begin{align}
    W'=[(M_S^\dagger)^{-1}\otimes (M_E^\dagger)^{-1}]W(M_S^{-1}\otimes M_E^{-1})
    \end{align}
    is also a witness (the renormalization factor only scales by a positive factor). Let $W$ be a witness for $\rho$ and thus $\tr(\rho W)<0$. Then we have $\tr(\rho' W')=\tr(\rho W)<0$. Thus, the measurement forms an ENATP map and the result of multiple rounds of measurements follows from Lemma \ref{lemma:compositionsENATP}.
\end{proof}

Theorem \ref{thm:ddimensionalENATP} leads to some immediate consequences for states comprised of qubits.
\begin{theorem}\label{thm:nQubitMeasENATP}[Finite sequence of measurement outcomes with non-zero determinant single qubit operators form an ENATP map]
    Let the bipartite initial state be comprised of an $n$-qubit system, arbitrary environment, and entangled. Some entanglement remains across the system and environment after an arbitrary finite number of rounds of local single qubit measurements on the system with known outcomes provided that the measurement operators are full rank (i.e., rank 2).
\end{theorem}
\begin{proof}
    This result follows from Theorem \ref{thm:ddimensionalENATP} by setting the system state to $n$ qubits and the measurement operators on the system to full rank single-qubit measurements.
\end{proof}

An example of Theorem \ref{thm:nQubitMeasENATP} is the single-qubit measurement
\begin{align}\label{eq:nearlyProjMeas1}
    M_0&=\sqrt{\epsilon}P_-+\sqrt{1-\epsilon}P_+\\
    \label{eq:nearlyProjMeas2}
    M_1&=\sqrt{1-\epsilon}P_-+\sqrt{\epsilon}P_+,
\end{align}
where $0<\epsilon\ll1$ and $P_-$ and $P_+$ are orthogonal projectors. $M_0$ and $M_1$ are asymptotically projective in the sense that they can be arbitrarily close to $P_+$ or $P_-$. Any finite number of rounds with this measurement and known outcomes form an ENATP map, namely, for \textit{any} $n$-qubit initially entangled system and environment input state the post measurement state is entangled. By setting $P_+=\op{0}$, this is a measurement that we might naively assume can prepare a ground state on a quantum computer.

A consequence of Theorem \ref{thm:nQubitMeasENATP} is that a finite sequence of general weak measurements with known outcomes forms an ENATP map.

\begin{corollary}\label{coro:weakMeasKnownOutCan'tBreakEnt}[Finite sequence of weak measurements with recorded outcomes form an ENATP map]
The general form of the weak measurement operator $M$ is given by Eq.~\eqref{eq:generalWeak}. We perform a finite sequence of local weak measurements on the $n$-qubit system and arbitrary environment. If we record the measurement outcomes then the weak measurements form an ENATP map and we cannot annihilate the entanglement of the bipartite input state.

\end{corollary}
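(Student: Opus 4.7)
The plan is to reduce Corollary \ref{coro:weakMeasKnownOutCan'tBreakEnt} directly to Theorem \ref{thm:seqOfMeasENATP}. That theorem already handles any finite sequence of local outcomes $\{R^S_i \otimes W^E_i\}$ provided each local factor is rank $2$, so it suffices to verify that an individual weak measurement operator on a qubit is rank $2$, i.e., has nonzero determinant.

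First, I would take the explicit form $\Omega = q(\mathbb{I} + \hat{\epsilon})$ with $q > 0$ and $\|\hat{\epsilon}\| \ll 1$. Because the spectral radius of any operator is bounded above by its norm, every eigenvalue of $\hat{\epsilon}$ has modulus strictly less than $1$, so every eigenvalue of $\mathbb{I} + \hat{\epsilon}$ has modulus strictly positive. Hence
\begin{equation*}
\det(\Omega) = q^{2}\det(\mathbb{I} + \hat{\epsilon}) \neq 0,
\end{equation*}
so $\Omega$ has full rank $2$ on the qubit. The degenerate case $q=0$ gives the zero operator, corresponding to an outcome with probability zero; this case is excluded by the premise that we have actually recorded the outcome.

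With the rank-$2$ property established for each local weak operator on the system and on the environment, I would apply Theorem \ref{thm:seqOfMeasENATP} to the finite sequence $\{\Omega^{S,(i)} \otimes \Omega^{E,(i)}\}_{i=1}^{n}$. The theorem immediately yields that the post-measurement concurrence is strictly positive whenever $C(\rho) > 0$, establishing the ENATP property. There is no substantive obstacle here, since the composition statement is handled internally by Lemma \ref{lemma:compositionsENATP}; the only point that deserves care is the invertibility of $\mathbb{I} + \hat{\epsilon}$, which follows from any quantitative weakness condition of the form $\|\hat{\epsilon}\| < 1$, most cleanly via the Neumann series $(\mathbb{I}+\hat{\epsilon})^{-1} = \sum_{k=0}^{\infty}(-\hat{\epsilon})^{k}$.
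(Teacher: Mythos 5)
Your proposal is correct and follows essentially the same route as the paper: establish that each single-qubit weak measurement operator is rank~2 and then invoke Theorem~\ref{thm:seqOfMeasENATP}. The only difference is cosmetic --- the paper justifies the rank-2 claim by appeal to the definition of a weak measurement (a rank-1 operator would drastically change some input states), whereas you derive $\det(\Omega)\neq 0$ quantitatively from the parametrization $\Omega=q(\mathbb{I}+\hat{\epsilon})$ with $\|\hat{\epsilon}\|<1$, which is a slightly more explicit version of the same step.
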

\begin{proof}
An arbitrary single qubit weak measurement operator has to be rank 2. Otherwise, it will violate the definition of a weak measurement \cite{Oreshkov_2005WeakMeasAreUniversal} that \textit{all} measurement outcomes result in very little change to all input states. Thus, from Theorem \ref{thm:nQubitMeasENATP}, the result follows.

\end{proof}


The determinant of a sequence of weak measurement operators is equal to the products of the determinants of the individual operators. Each weak measurement operator is rank 2 and has nonzero determinant.
Thus, the determinant is zero and hence from Theorem \ref{thm:nQubitMeasENATP} entanglement is broken only in the limit that the number of weak measurements  goes to infinity. Therefore, a perfect projective operation is required (a rank 1 operation is the outer product of two vectors and is equivalent to an unormalized projection followed by a unitary).

We can also obtain the level of entanglement remaining for a two-qubit input state. For any single local measurement outcome $M$, we have the important property
\begin{align}
\hat{\sigma}_y M^T \hat{\sigma}_y M = \text{Det}(M) \cdot \mathbb{I}.
\end{align}
Let $\rho_R=(M\otimes \mathbb{I})\rho(M^\dagger\otimes \mathbb{I})$ be the unormalized post-measurement state. The square root of the eigenvalues for the calculation of Wootters' concurrence are
\begin{align}
    &\sqrt{\text{Eig}(\rho_R \hat{\sigma}_y \otimes \hat{\sigma}_y \rho_R^* \hat{\sigma}_y \otimes \hat{\sigma}_y)} \nonumber\\
    = &\sqrt{\text{Eig}[(M \otimes \mathbb{I}) \rho (M^\dagger \otimes \mathbb{I})(\hat{\sigma}_y M^* \otimes \hat{\sigma}_y) \rho^* (M^T \hat{\sigma}_y \otimes \hat{\sigma}_y)]} \nonumber\\
    = &\sqrt{\text{Eig}(|\text{Det}(M)|^2 \rho \tilde{\rho})}   \label{eq:eigenPostConcur},
\end{align}
and if we want the exact concurrence, we simply divide by the renormalization factor $\tr(\rho_R) = \tr (M \otimes \mathbb{I} \rho M^\dagger \otimes \mathbb{I})$. Using Eq.~\eqref{eq:eigenPostConcur} we  obtain
\begin{align}
    C(\rho_R/\tr(\rho_R)) = \dfrac{|\text{Det}(M)|}{\tr (M \otimes \mathbb{I} \rho M^\dagger \otimes \mathbb{I})}  C(\rho)
\end{align}
for the post measurement state.

Next, we present results on breaking initial correlations with weak measurements. Note that to ``break initial correlations" we mean that the final state is a product state, $\rho_S\otimes \rho_E$, and that  the correlation matrix $\mathcal{T}_{ij} = \tr (\sigma_i \otimes \sigma_j \rho)$ can be nonzero for a product state. Consider the following form of what we define as ``special" weak measurements. Let
\begin{align}
\label{Eq:special_weak_measurement}
    M_{\pm}(\epsilon,\hat{n}) = \dfrac{1}{2} \big( \varepsilon_+ \mathbb{I} \pm \varepsilon_- \hat{n} \cdot \vec{\sigma} \big),
\end{align}
where
\begin{align}
    \varepsilon_{\pm} = \sqrt{\dfrac{1+\epsilon}{2}} \pm \sqrt{\dfrac{1-\epsilon}{2}},
\end{align}
$\hat{n} = \{ n_x, n_y, n_z \}$ is real, $\hat{n} \cdot \hat{n} = 1$, and $-1\leq\epsilon\leq 1$. Eq.~\eqref{Eq:special_weak_measurement} is a weak measurement when $\abs{\epsilon}$ is small, which leaves the state almost unchanged. Note that $M_+$ and $M_-$ commute.

\begin{theorem}[A special weak measurement with recorded outcome forms a CNATP map]\label{thm:iffForZeroChi}
    For a correlated two-qubit input state $\rho_{SE}$, we measure the system with $M_\pm$. Let $\mathcal{T}'$, $\vec{a}'$, and $\vec{b}'$ be the final correlation matrix, local Bloch vector for the system, and local Bloch vector for the environment, respectively. For a known measurement outcome, the equality $\mathcal{T}' = \vec{a}' \vec{b}'^T$, in other words the final state is a product state, holds iff $\epsilon = \pm1$.
\end{theorem}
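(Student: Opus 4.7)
The plan is to reduce the theorem to an elementary dichotomy: $M_\pm$ is rank two (and hence invertible) for $|\epsilon|<1$, and collapses to a rank-one projector exactly at $\epsilon=\pm 1$. First I would diagonalize $M_\pm$ in the eigenbasis of $\hat n\cdot\vec\sigma$. Because $\tfrac12(\varepsilon_+\pm\varepsilon_-)=\sqrt{(1\pm\epsilon)/2}$, $M_+$ is $\mathrm{diag}(\sqrt{(1+\epsilon)/2},\sqrt{(1-\epsilon)/2})$ and $M_-$ swaps the two entries; in either case $|\det M_\pm|=\tfrac12\sqrt{1-\epsilon^2}$, vanishing iff $\epsilon=\pm 1$.

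For the ($\Leftarrow$) direction I would substitute $\epsilon=\pm 1$ into Eq.~\eqref{Eq:special_weak_measurement} to obtain $M_\pm=\tfrac12(\mathbb{I}\pm\hat n\cdot\vec\sigma)$, a rank-one orthogonal projector $\op{\psi}$ on the system side. Then $(\op{\psi}\otimes\mathbb{I})\rho_{SE}(\op{\psi}\otimes\mathbb{I})=\op{\psi}\otimes\bra{\psi}\rho_{SE}\ket{\psi}$, which after normalization is manifestly a product state, so $\mathcal{T}'=\vec a'\vec b'^T$ holds automatically, irrespective of the correlated input.

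For the ($\Rightarrow$) direction I would argue by contrapositive. If $|\epsilon|<1$ then $M_\pm\otimes\mathbb{I}$ is invertible, so the (unnormalized) map $\rho_{SE}\mapsto(M_\pm\otimes\mathbb{I})\rho_{SE}(M_\pm\otimes\mathbb{I})$ is a bijection on bipartite operators. Were the post-measurement state a product $\rho_S'\otimes\rho_E'$, acting on both sides with $M_\pm^{-1}\otimes\mathbb{I}$ would give $\rho_{SE}\propto(M_\pm^{-1}\rho_S'M_\pm^{-1})\otimes\rho_E'$, itself a product state, contradicting the hypothesis that $\rho_{SE}$ is correlated. Hence $\rho_{SE}'\neq\rho_S'\otimes\rho_E'$, equivalently $\mathcal{T}'\neq\vec a'\vec b'^T$. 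The one housekeeping point -- that the Bloch-form equality $\mathcal{T}'=\vec a'\vec b'^T$ is equivalent to the operator statement $\rho_{SE}'=\rho_S'\otimes\rho_E'$ -- follows from the uniqueness of the Bloch parameterization together with the direct calculation that a product state has correlation matrix $\vec a\vec b^T$; this is the main ``obstacle,'' but it is notational rather than substantive, and the invertibility argument itself is essentially one line.
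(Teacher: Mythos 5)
Your proof is correct, and your converse direction takes a genuinely different and much shorter route than the paper. The paper's forward direction is the same in spirit as yours (substitute $\epsilon=\pm1$ and observe the output is a product), but its converse grinds through the Bloch representation: it diagonalizes the correlation matrix with local unitaries, writes out the transformed $\vec a'$, $\vec b'$, $\mathcal{T}_D'$ explicitly, substitutes $\chi=\mathcal{T}_D-\vec a\,\vec b^T$, projects the resulting matrix equation onto $\hat n\hat n^T$ to get $(1-\epsilon^2)\,\hat n\hat n^T\chi=0$, and then eliminates the spurious branches (the case $\hat n^T\chi=\vec 0^T$ forces $\epsilon\,\hat n^T\vec a=-1$, impossible for $|\epsilon|<1$). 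Your invertibility argument replaces all of that: for $|\epsilon|<1$ the eigenvalues of $M_\pm$ are $\sqrt{(1\pm\epsilon)/2}>0$, so $M_\pm\otimes\mathbb{I}$ is invertible and conjugation by the Hermitian operator $M_\pm^{-1}\otimes\mathbb{I}$ sends a (positive) product operator to a positive product operator; a product output would therefore force a product input, contradicting the hypothesis. This is cleaner, dimension-independent, and actually proves the stronger statement that any invertible local Kraus operator with a recorded outcome preserves the property of being correlated -- the special form of $M_\pm$ enters only through the equivalence $\det M_\pm=0\iff\epsilon=\pm1$. Your ``housekeeping'' point is also sound, since $\rho=\rho_S\otimes\rho_E$ exactly when $\mathcal{T}=\vec a\,\vec b^T$ in the Pauli expansion. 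What the paper's computation buys is the explicit transformation law for $(\vec a,\vec b,\mathcal{T})$ under the measurement, which it reuses elsewhere; as a proof of this theorem, your route is preferable. The only degenerate case neither treatment addresses is a zero-probability outcome ($1+\epsilon\,\hat n^T\vec a=0$ at $\epsilon=\pm1$), where the post-measurement state is undefined.
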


An immediate implication of Theorem \ref{thm:iffForZeroChi} is that a finite sequence of special weak measurements with recorded outcomes form two-qubit CNATP maps. This is stated formally as follows.
\begin{corollary}[Finite sequence of special weak measurements with recorded outcomes form a two-qubit CNATP map]
We perform a finite sequence of special local weak measurements on both the system and environment of two-qubit system and environment states. If we know the measurement outcomes then the weak measurements form a two-qubit CNATP map and we cannot annihilate the correlations between any correlated two-qubit input state.
\end{corollary}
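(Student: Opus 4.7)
The plan is to reduce the corollary to two already-established ingredients: Theorem~\ref{thm:iffForZeroChi}, which handles a single special local weak measurement on the system, and Lemma~\ref{lemma:compositionsENATP}, which closes the CNATP class under composition. I would first write any finite sequence of recorded-outcome local special weak measurements as a composition of single-outcome channels of the form $\rho \mapsto (M_{\pm}\otimes\mathbb{I})\,\rho\,(M_{\pm}\otimes\mathbb{I})/\tr(\cdot)$ or $\rho \mapsto (\mathbb{I}\otimes M_{\pm})\,\rho\,(\mathbb{I}\otimes M_{\pm})/\tr(\cdot)$, with parameter $|\epsilon|<1$ in every factor, since the weak-measurement hypothesis strictly excludes the projective limit $\epsilon=\pm 1$.

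The second step is to verify that each such single-outcome channel is itself CNATP on two-qubit states. For a system-side measurement this is exactly the content of Theorem~\ref{thm:iffForZeroChi}: the post-measurement state is of product form $\mathcal{T}'=\vec{a}'\vec{b}'^{T}$ iff $\epsilon=\pm 1$, which weakness rules out, so correlations in any initially correlated input cannot be annihilated. For an environment-side measurement I would invoke the bipartite symmetry of the setup: the correlation matrix $\mathcal{T}_{ij}$, the local Bloch vectors, and the definition~\eqref{Eq:special_weak_measurement} of $M_{\pm}$ are all symmetric under swapping the two tensor factors, so conjugating by the swap reduces the environment-side case to a system-side case on a relabelled state, to which Theorem~\ref{thm:iffForZeroChi} applies verbatim.

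With every building block confirmed to be CNATP, Lemma~\ref{lemma:compositionsENATP} finishes the proof in one line: the composition of any finite collection of CNATP maps, in any order and on either subsystem, is itself CNATP, which is exactly the assertion of the corollary.

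The main obstacle I anticipate lies in the environment-side reduction, since Theorem~\ref{thm:iffForZeroChi} is written explicitly for a system-side measurement. A quick bookkeeping check on Eqs.~\eqref{eq:b'}--\eqref{eq:t_d'} should suffice: swapping the qubit labels and moving $M_{\pm}$ to the second tensor factor yields structurally identical equations with $\vec{a}$ and $\vec{b}$ interchanged, so the ``iff $\epsilon=\pm 1$'' dichotomy carries over unchanged. Nothing deeper than this relabelling appears to be required.
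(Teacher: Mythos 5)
Your proposal is correct and matches the paper's own argument, which likewise reduces the corollary to Theorem~\ref{thm:iffForZeroChi} for each single measurement outcome and then invokes Lemma~\ref{lemma:compositionsENATP} to compose them. Your explicit swap-symmetry justification for environment-side measurements is a detail the paper leaves implicit, but it is the same route.
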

\begin{proof}
The sequence of measurement outcomes can be applied sequentially. Then the result follows from Lemma \ref{lemma:compositionsENATP} and Theorem \ref{thm:iffForZeroChi}.
\end{proof}

Finally, we present results for breaking entanglement with non projective measurements when outcomes are not recorded. This scenario can be interpreted as one where the experimenter performs a series of measurements, but forgets the results. The final density matrix that we assign is a convex combination of all the possible outcomes. From our above results, each possible measurement outcome is entangled and the final state is in one of these states. However, the form that we can assign to the post measurement state is limited by our ignorance. The entanglement of this assigned state is the entanglement discussed in the following results. Even in this scenario, the entanglement of the assigned state is difficult to break with non projective measurements. For an initial two-qubit pure state, breaking the initial entanglement with special weak measurements and unknown outcomes requires infinite rounds of measurements. It is convenient to start with a simple case.

\begin{lemma}\label{lemma:weakBrunPure}
Let the two-qubit initial state be a pure state. Consider the weak measurement
\begin{align}\label{eq:weakMeasBrun1}
    M_0&=\sqrt{\dfrac{1+\epsilon}{2}}\op{0}{0}+\sqrt{\dfrac{1-\epsilon}{2}}\op{1}{1}\\
    \label{eq:weakMeasBrun2}
    M_1&=\sqrt{\dfrac{1-\epsilon}{2}}\op{0}{0}+\sqrt{\dfrac{1+\epsilon}{2}}\op{1}{1},
\end{align}
where $0\leq\epsilon< 1$. 

We perform multiple rounds of measurements on the first qubit. If the initial concurrence $C(\ket{\psi})\neq 0$, the concurrence of the post measurement state is non zero for any arbitrary finite number of rounds of the same repeated weak measurement with unknown outcomes.
\end{lemma}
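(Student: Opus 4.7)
The plan is to recast the unknown-outcome evolution as a single CPTP channel, compute its form after $n$ rounds in closed form, and then evaluate Wootters' concurrence on the resulting rank-two output. Because outcomes are discarded, one round on the first qubit acts on any $\rho$ as $\mathcal{E}(\rho)=A_0\rho A_0^\dagger+A_1\rho A_1^\dagger$, and since $A_0,A_1$ are simultaneously diagonal in $\{\ket{0},\ket{1}\}$, the diagonal entries of $\rho$ are preserved while the off-diagonals pick up the multiplicative factor $\gamma:=\sqrt{1-\epsilon^2}$. Equivalently, $\mathcal{E}(\rho)=q\rho+(1-q)\sigma_z\rho\sigma_z$ with $q=(1+\gamma)/2$, and a one-line induction (dephasing channels compose to dephasing channels) yields $\mathcal{E}^n(\rho)=q_n\rho+(1-q_n)\sigma_z\rho\sigma_z$ with $q_n=(1+\gamma^n)/2$.

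Applying $\mathcal{E}^n\otimes\mathbb{I}$ to the pure input then produces the rank-two state
\begin{align}
\rho_n=q_n\ket{\psi}\bra{\psi}+(1-q_n)(\sigma_z\otimes\mathbb{I})\ket{\psi}\bra{\psi}(\sigma_z\otimes\mathbb{I}).
\end{align}
Since $\rho_n$ has rank at most two, so does $\rho_n\tilde\rho_n$, and Wootters' formula collapses to $C(\rho_n)=\max\{0,\sqrt{\lambda_1}-\sqrt{\lambda_2}\}$, with $\lambda_1\geq\lambda_2$ the two possibly-nonzero eigenvalues. My plan is to identify the two-dimensional support of $\rho_n\tilde\rho_n$ explicitly and diagonalize there.

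Writing $\ket{\psi}=\ket{0}\ket{a}+\ket{1}\ket{b}$ with unnormalized $\ket{a},\ket{b}$ on the second qubit, the range of $\rho_n$ is spanned by the orthogonal vectors $\ket{0}\ket{a}$ and $\ket{1}\ket{b}$. Using the anti-commutation $(\sigma_y\otimes\sigma_y)(\sigma_z\otimes\mathbb{I})=-(\sigma_z\otimes\mathbb{I})(\sigma_y\otimes\sigma_y)$ together with the single-qubit identity $\langle a|\tilde b\rangle=-\langle b|\tilde a\rangle$ (where $\ket{\tilde x}:=\sigma_y\ket{x^*}$), a direct computation shows that in the basis $\{\ket{0}\ket{a},\ket{1}\ket{b}\}$ the operator $\rho_n\tilde\rho_n$ is represented by
\begin{align}
|w|^2\begin{pmatrix}1+\gamma^{2n} & 2\gamma^n\\ 2\gamma^n & 1+\gamma^{2n}\end{pmatrix},\qquad w:=\langle a|\tilde b\rangle,
\end{align}
whose eigenvalues are $|w|^2(1\pm\gamma^n)^2$. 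Taking square roots and subtracting, together with the pure-state identity $C(\ket{\psi})=2|w|$, gives
\begin{align}
C(\rho_n)=2\gamma^n|w|=(1-\epsilon^2)^{n/2}\,C(\ket{\psi}),
\end{align}
which is strictly positive for any finite $n$ whenever $\epsilon<1$ and $C(\ket{\psi})>0$.

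The main obstacle is the algebra of the last step: one has to track the signs produced by passing $\sigma_y\otimes\sigma_y$ through $\sigma_z\otimes\mathbb{I}$, verify that $\langle a|\tilde a\rangle=\langle b|\tilde b\rangle=0$ whereas $\langle a|\tilde b\rangle=-\langle b|\tilde a\rangle$ so that the four off-diagonal contributions to $\rho_n\tilde\rho_n$ align, and recognize that $|w|=C(\ket{\psi})/2$. Everything else is bookkeeping, and once the $2\times 2$ block above is established the strict positivity of $C(\rho_n)$ for every finite $n$ is immediate.
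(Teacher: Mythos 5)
Your argument is correct and lands on exactly the paper's key identity $C(\rho_n)=(1-\epsilon^2)^{n/2}\,C(\ket{\psi})$, but it gets there by a different and more explicit route. The paper writes the unconditioned post-measurement state as the $(n+1)$-term binomial sum $\sum_{m}\binom{n}{m}(A_0^mA_1^{n-m}\otimes\mathbb{I})\op{\psi}(A_0^mA_1^{n-m}\otimes\mathbb{I})$ (using that $A_0$ and $A_1$ commute) and then simply asserts the resulting concurrence; the spectral computation behind that assertion is not shown. You instead note that discarding outcomes makes each round the dephasing channel $\mathcal{E}(\rho)=q\rho+(1-q)\sigma_z\rho\sigma_z$ with $2q-1=\gamma=\sqrt{1-\epsilon^2}$, so the $n$-fold composition collapses to a two-term, rank-two mixture with off-diagonal damping $\gamma^n$, and you then diagonalize $\rho_n\tilde\rho_n$ on its invariant subspace $\mathrm{span}\{\ket{0}\ket{a},\ket{1}\ket{b}\}$. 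I verified the bookkeeping you flag as the main obstacle: $\braket{b}{\tilde a}=-\braket{a}{\tilde b}$ follows from $\sigma_y^T=-\sigma_y$, the cross terms do align to give the $2\times2$ block $\abs{w}^2\bigl(\begin{smallmatrix}1+\gamma^{2n} & 2\gamma^n\\ 2\gamma^n & 1+\gamma^{2n}\end{smallmatrix}\bigr)$ with square-rooted eigenvalues $\abs{w}(1\pm\gamma^n)$, and $\abs{w}=\abs{ad-bc}=C(\ket{\psi})/2$, so the conclusion $C(\rho_n)=2\gamma^n\abs{w}>0$ for finite $n$ and $\epsilon<1$ is sound (note the nonzero eigenvalues of the full operator are those of this block since its range lies in the span, which is two-dimensional whenever $C(\ket{\psi})\neq 0$). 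What your route buys is an actual derivation of the step the paper leaves implicit, with the two-Kraus dephasing form making the $n$-fold composition trivial; what the paper's binomial form buys is a representation that carries over directly to the two-sided measurements of Corollary 3, where the two local dephasing factors simply multiply.
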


Then using the fact that any special weak measurement Eq.~\eqref{Eq:special_weak_measurement} can be rotated to the forms of Eqs.~\eqref{eq:weakMeasBrun1} and \eqref{eq:weakMeasBrun2} by a local unitary (e.g., by rotating $\hat{n}$ to the z direction) we arrive at the following theorem.
\begin{theorem}[A finite sequence of a special weak measurement with unknown outcomes cannot break entanglement of a pure state]\label{thm:specialWeakMeasEntOneSys}

Let the two-qubit initial state be a pure state. We repeatedly measure the first qubit with a special weak measurement. If the initial concurrence $C(\ket{\psi})\neq 0$, the concurrence of the post measurement state is non zero for any arbitrary finite number of rounds of the same repeated weak measurement with unknown outcomes.
\end{theorem}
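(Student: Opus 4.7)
The plan is to reduce the theorem directly to Lemma \ref{lemma:weakBrunPure} by exploiting that any special weak measurement $M_\pm(\epsilon,\hat{n})$ is unitarily equivalent to the diagonal operators $A_0, A_1$ of Eqs.~\eqref{eq:weakMeasBrun1}--\eqref{eq:weakMeasBrun2}. Concretely, I would pick a single qubit unitary $U\in SU(2)$ with $U(\hat{n}\cdot\vec{\sigma})U^\dagger = \sigma_z$; a direct computation with $\varepsilon_\pm$ then yields
\[
U M_+(\epsilon,\hat{n}) U^\dagger = A_0, \qquad U M_-(\epsilon,\hat{n}) U^\dagger = A_1,
\]
which is exactly the local rotation alluded to in the remark preceding the theorem statement.

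Next, I would propagate this identity through iterated measurements on the two qubit pure state $|\psi\rangle$. For every outcome string $(i_1,\dots,i_n)\in\{+,-\}^n$, the unnormalized post-measurement vector obeys
\[
(M_{i_n}\cdots M_{i_1}\otimes\mathbb{I})|\psi\rangle = (U^\dagger\otimes\mathbb{I})(A_{i_n}\cdots A_{i_1}\otimes\mathbb{I})|\psi'\rangle,
\]
with $|\psi'\rangle := (U\otimes\mathbb{I})|\psi\rangle$, and the probability of the string is unchanged by conjugation with $U\otimes\mathbb{I}$. Summing the normalized projectors weighted by these probabilities then gives
\[
\rho_{\text{out}} = (U^\dagger\otimes\mathbb{I})\,\rho'_{\text{out}}\,(U\otimes\mathbb{I}),
\]
where $\rho'_{\text{out}}$ is the analogous mixture produced by repeating the diagonal measurement $\{A_0,A_1\}$ on the first qubit of $|\psi'\rangle$.

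Finally, since Wootters' concurrence is invariant under local unitaries (the $\sigma_y$-conjugation satisfies $\sigma_y U^* \sigma_y = U$ for $U\in SU(2)$), one gets $C(|\psi'\rangle)=C(|\psi\rangle)\neq 0$, so Lemma \ref{lemma:weakBrunPure} applies to $|\psi'\rangle$ and yields $C(\rho'_{\text{out}})>0$. Invoking local-unitary invariance once more on $\rho_{\text{out}} = (U^\dagger\otimes\mathbb{I})\rho'_{\text{out}}(U\otimes\mathbb{I})$ delivers $C(\rho_{\text{out}})>0$, which is the claim. The genuine work sits entirely inside Lemma \ref{lemma:weakBrunPure}, where the hard part is controlling the eigenvalues of $\rho\tilde\rho$ across a convex combination of $2^n$ trajectories; the present theorem reduces to a short symmetry argument, so the main obstacle has already been discharged in the lemma.
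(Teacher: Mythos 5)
Your proof is correct and takes essentially the same route as the paper's: conjugate $M_\pm(\epsilon,\hat n)$ by the local unitary sending $\hat n\cdot\vec\sigma$ to $\sigma_z$ so that they become $A_0$ and $A_1$, pull that unitary through the outcome mixture, and invoke Lemma \ref{lemma:weakBrunPure} together with local-unitary invariance of the concurrence. Your version merely spells out explicitly the reduction that the paper states in two lines.
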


We can generalize Theorem \ref{thm:specialWeakMeasEntOneSys} to the case where we measure both systems.
\begin{corollary}[A finite sequence of local special weak measurement with unknown outcomes on the system and environment cannot break entanglement of a pure state]\label{coro:specialWeakUknownBothSys}

Let the two-qubit initial state be a pure state. We measure the first qubit $n$ rounds and the second qubit $y$ rounds with possibly different special weak measurements (the measurement on each qubit stays the same). If the initial concurrence $C(\ket{\psi})\neq 0$, the concurrence of the post measurement state is non zero for any arbitrary finite number of rounds of the weak measurements with unknown outcomes.
\end{corollary}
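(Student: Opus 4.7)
The plan is to reduce the two-qubit-measurement statement to the single-qubit-measurement Theorem \ref{thm:specialWeakMeasEntOneSys} by exploiting that the measurement operators on the two qubits can be brought into a common simultaneously-diagonal form. First, apply local unitaries to rotate both special weak measurements into the canonical computational-basis form of Eqs.~\eqref{eq:weakMeasBrun1}-\eqref{eq:weakMeasBrun2}; because local unitaries preserve concurrence, this is no loss of generality.

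Second, use commutativity: every $A_j$ and every $B_k$ is diagonal in the $z$-basis, so all $n+y$ Kraus operators pairwise commute and the order of application is immaterial. Writing $D_m=A_0^{n-m}A_1^m$ and $E_r=B_0^{y-r}B_1^r$, the unknown-outcome state is the multinomial mixture
\begin{align*}
\rho=\sum_{m=0}^{n}\sum_{r=0}^{y}\binom{n}{m}\binom{y}{r}(D_m\otimes E_r)\ket{\psi}\bra{\psi}(D_m\otimes E_r),
\end{align*}
which factors as $\rho=(\mathcal{E}_S\otimes\mathcal{E}_E)(\ket{\psi}\bra{\psi})$ for two commuting local CPTP channels. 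Theorem \ref{thm:specialWeakMeasEntOneSys} gives $C(\mathcal{E}_S(\ket{\psi}\bra{\psi}))>0$, and by the system-environment symmetry of Wootters' formula the analogous statement for $\mathcal{E}_E$ holds on any pure entangled input.

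The main obstacle is that $\mathcal{E}_S(\ket{\psi}\bra{\psi})$ is mixed, so the pure-state theorem does not immediately iterate through $\mathcal{E}_E$; moreover, concurrence is not convex, so bounding component by component fails. To close the gap, I would work with $\rho$ directly: because every $D_m\otimes E_r$ is real, diagonal, and nonsingular, and satisfies $\sigma_y D_m\sigma_y=D_{n-m}$ and $\sigma_y E_r\sigma_y=E_{y-r}$, the companion matrix $\tilde\rho=(\sigma_y\otimes\sigma_y)\rho^*(\sigma_y\otimes\sigma_y)$ inherits the same sum-of-outer-products structure as $\rho$, with $\ket{\psi}$ replaced by $\sigma_y\otimes\sigma_y\ket{\psi^*}$ and the summation reindexed by $m\mapsto n-m$, $r\mapsto y-r$. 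The eigenvalue problem for $\rho\tilde\rho$ then reduces to the same strict-positivity calculation that underlies the proof of Lemma \ref{lemma:weakBrunPure}, yielding $C(\rho)>0$ whenever $C(\ket{\psi})\neq 0$ and no round has attained the projective limit.
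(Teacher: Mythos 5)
Your proposal follows essentially the same route as the paper's proof: rotate both special weak measurements into the diagonal $A_0,A_1$ form by local unitaries, expand the unknown-outcome state as the double binomial mixture over $A_0^{m}A_1^{n-m}\otimes A_0^{x}A_1^{y-x}$, and compute the concurrence of that mixture directly --- the paper simply states the closed form $C(\psi')=(1-\epsilon_1^2)^{n/2}(1-\epsilon_2^2)^{y/2}\,C(\psi)$, and your observation that $\sigma_y A_0\sigma_y=A_1$ (so $\tilde\rho$ inherits the reindexed outer-product structure) is exactly the right ingredient for deriving it. One minor correction: Wootters' concurrence \emph{is} convex, and that is precisely why component-wise bounding fails --- convexity only upper-bounds the mixture's concurrence by that of its components, it does not lower-bound it.
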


The following are examples of breaking the entanglement of the assigned state with rank 2 measurements and unknown outcomes. For these examples, the initial entanglement is broken with one round of measurements on both the system and environment and unknown outcomes.

\textbf{Example 1:} Let the initial state be
\begin{align}
    \rho_{SE}=\begin{bmatrix}
    0.5 &0 &0 &a\\
    0 &0 &0 &0\\
    0 &0 &0 &0\\
    a &0 &0 &0.5
    \end{bmatrix},
\end{align}
where $a$ is a parameter that lets us vary the initial entanglement. Let $a=0.002$ and the special weak measurement have the parameters $\epsilon=0.1$ and $\vec{n}=(1,0,0)$.

\textbf{Example 2:} Apply the measurement
\begin{align}
    M_{\pm} = \dfrac{1}{\sqrt{10}}
    \left(
    \begin{array}{cc}
        2 & \pm 1 \\
        \pm 1 & 2
    \end{array}
    \right)
\end{align}
to the state $\dfrac{1}{8} \left( 5 \op{\Phi^+}{\Phi^+} + 3 \op{\Phi^-}{\Phi^-} \right)$, where the Bell state $\ket{\Phi^\pm} = (1/\sqrt{2}(\ket{00}\pm\ket{11}))$.

\textbf{Example 3:} Lastly, there exists initial pure states whose entanglement can be broken by a general weak measurement with unknown outcomes. Consider the initial pure state
\begin{align}
    \ket{\psi} = \text{cos}(\theta/2)\ket{00} + \text{sin}(\theta/2)\ket{11}.
\end{align}
and the weak measurement
\begin{align}
    M_{\pm} = \sqrt{\dfrac{1\mp\epsilon}{2}} \left( \mathbb{I} \pm \epsilon(\mathbb{I}+\hat{\sigma}_x) \right)
\end{align}
for small $\epsilon$ values. Let the state be weakly entangled by setting $\theta=0.0001$. For $\epsilon=0.01$, the general weak measurement operators have $\norm{\hat\epsilon}=\norm{\epsilon(\mathbb{I}+\hat\sigma_x)}\approx 0.02.$

{\it Conclusions}.--
Recent results on the thermodynamic costs of the implementation of perfect projective measurements bring into question their practicality \cite{Guryanova_2020IdealProjMeasHaveInfiniteResourceCost,Guryanova_2020IdealProjMeasHaveInfiniteResourceCost,Masanes_2017AGenDerivAndQuantOfThe3rdLawOfThermo, Taranto_2021LandauerVSNernstWhatIsTheTrueCostOfCoolingAQuantSys}. When measurements are restricted to be non-projective, the ability to break entanglement or correlations via local operations becomes very restricted.

Our results show that disentangling any $n$-qubit bipartite state via single qubit measurements with recorded outcomes, requires either an infinite sequence of rank 2 measurements or performing a rank 1 operation (an unormalized projective measurement operator followed by a unitary). We proved this through a sequence of weak measurements. Removing correlations is also difficult. Special weak measurements with recorded outcomes cannot remove initial correlations.

Let us clarify the assumptions regarding the measurement operator. Since we are assuming that the experimenter cannot design a perfect measurement, we also assume that they do not know the exact form of the measurement operators. We can narrow down our scenario to two outcome measurements because a measurement with more outcomes can be implemented using a sequence of two outcome measurements. For a two-outcome measurement the operators are
\begin{align}
    &M'_0=M_0+E_0\\
    &M'_1=M_1+E_1,
\end{align}
where $M_{0,1}$ are the ideal measurement operators and $E_{0,1}$ are unknown imperfections of the measurement operators. We make the assumption that the errors $E_{0,1}$ prevent $M'_{0,1}$ from being projective \cite{Guryanova_2020IdealProjMeasHaveInfiniteResourceCost}, and thus in the case of single qubit measurements $M_{0,1}'$ are full rank. 

Assume that we are dealing with qubits and we record the zero outcome. The post measurement state for an initial state $\rho$ is $\rho'=M_0'\rho M_0'^\dagger/\tr(M_0'\rho M_0'^\dagger)$. From Theorem \ref{thm:nQubitMeasENATP}, we know that $\rho'$ must be entangled. However, ignorance of $E_{0,1}$ prevents us from knowing the exact form of $\rho'$. Given our limited knowledge, one state we can assign is $\rho'\approx M_0\rho M_0^\dagger/\tr(M_0\rho M_0^\dagger)$, but we know that this is only an approximation of the true state.

Similarly, if measurement results are unknown (we do not know if $M_0'$ or $M_1'$ occurred), our ignorance limits the density matrix that we can assign after measurement despite the state being entangled. Breaking the entanglement of the density matrix that we can assign is also difficult.

Clearly there are many important implications for these results, including, but not limited to, preparation of states for quantum information processing and quantum error correction. For instance, the lack of a ground state starting point degrades the starting fidelity of a quantum computer \cite{Buffoni_2022ThirdLawOfThermoAndtheScalingOfQuantComp}.  For a fully functioning quantum computing device, with error rates well below the threshold for the code, this persistent entanglement may be less of an issue.  However, for noisy intermediate-scale quantum (NISQ) devices the remaining correlations and non-Markovian behavior, will be much more troublesome.  In   addition, the inability to remove entanglement could cause noisy operators to become nonlocal and propagate unwanted correlations throughout the system.
With further investigation, such effects might be overcome by new methods for error mitigation.

{\it Acknowledgements}.--
This research was supported in part by an appointment to the Intelligence Community Postdoctoral Research Fellowship Program at Argonne National Laboratory, administered by Oak Ridge Institute for Science and Education through an interagency agreement between the U.S. Department of Energy and the Office of the Director of National Intelligence. Funding for this research was provided in part by the NSF, MPS under award number PHYS-1820870.

\appendix
\section{Proof of Theorem \ref{thm:iffForZeroChi}}
We first present the detailed proof for the result that special weak measurements do not break initial correlations. Note that to ``break initial correlations" we mean that the final state is a product state, $\rho_S\otimes \rho_E$. We note here that the correlation matrix $\mathcal{T}$ can be nonzero for a product state. Recall the following definition of what we call ``special" weak measurements. Let
\begin{align}
\label{Eq:special_weak_measurement_apdx}
    M_{\pm}(\epsilon,\hat{n}) = \dfrac{1}{2} \big( \varepsilon_+ \mathbb{I} \pm \varepsilon_- \hat{n} \cdot \vec{\sigma} \big),
\end{align}
where
\begin{align}
    \varepsilon_{\pm} = \sqrt{\dfrac{1+\epsilon}{2}} \pm \sqrt{\dfrac{1-\epsilon}{2}},
\end{align}
$\hat{n} = \{ n_x, n_y, n_z \}$ is real, $\hat{n} \cdot \hat{n} = 1$, and $-1\leq\epsilon\leq 1$. Eq.~\eqref{Eq:special_weak_measurement_apdx} is a special weak measurement when $\epsilon$ is small, which leaves the state almost unchanged. Note that $M_+$ and $M_-$ commute.

\begin{proof}
First, we consider outcome $M_i$. We diagonalize the correlation matrix of $\rho^{SE}$ using the unitary $U \otimes V$ \cite{Dilley_2022SO(3)toSu(2)}.  The local unitaries can diagonalize the correlation matrix because they induce left and right rotations on the correlation matrix \cite{Dilley_2022SO(3)toSu(2)}. We see that
\begin{align}
    &(M_iU^\dagger \otimes V^\dagger)(U\otimes V \rho^{SE} U^\dagger\otimes V^\dagger) (U M_i \otimes V) \nonumber \\
    =&(M_iU^\dagger \otimes V^\dagger)(\rho^{SE}_D) (U M_i \otimes V) \nonumber \\
    =& (U^\dagger \otimes V^\dagger)(M_{U,i} \otimes \mathbb{I}) \rho^{SE}_D (M_{U,i} \otimes \mathbb{I})(U \otimes V),
\end{align}
where $\rho_D^{SE}$ has a diagonal correlation matrix, we used the fact that local unitaries do not destroy correlations, and $M_{U,i} = U M_i U^\dagger$ is just another weak measurement of the form of Eq.~\eqref{Eq:special_weak_measurement_apdx}.

Thus, we can narrow down our study to measurements on states whose correlation matrix is diagonal; that is, we now focus our attention to 
\begin{align}
    (M \otimes \mathbb{I}) \rho^{SE}_D (M \otimes \mathbb{I}).
\end{align}
Let us focus on the $M_+$ outcome. The correlation matrix and resultant Bloch vectors for this state are given by
\begin{align}
    \va{b}' &= \dfrac{\va{b} + \epsilon \; \mathcal{T}_D \vu{n}}{1 + \epsilon \; \vu{n}^T \va{a}} \\ \nonumber
    \va{a}' &= \dfrac{\sqrt{1-\epsilon^2} \; \va{a} + (1-\sqrt{1-\epsilon^2}) (\vu{n} \vu{n}^T) \va{a} + \epsilon \; \vu{n}}{1 + \epsilon \; \vu{n}^T \va{a}} \\
    \mathcal{T}_D' &= \dfrac{\sqrt{1-\epsilon^2} \mathcal{T}_D + (1-\sqrt{1-\epsilon^2}) (\vu{n} \vu{n}^T) \mathcal{T}_D + \epsilon \; \vu{n} \va{b}^T}{1 + \epsilon \; \vu{n}^T \va{a}}
\end{align}
for the initial diagonal correlation matrix $\mathcal{T}_D$, system Bloch vector $\vec{a}$, and environment Bloch vector $\vec{b}$. 
Then, we have
\begin{align}
    \va{b}'^T &= \dfrac{\va{b}^T + \epsilon \; \vu{n}^T\mathcal{T}_D}{1 + \epsilon \; \vu{n}^T \va{a}}.
\end{align}
Then
\begin{widetext}
\begin{align}
    \va{a}'\va{b}'^T=\dfrac{\sqrt{1-\epsilon^2}( \va{a}\va{b}^T+\epsilon \; \va{a}\vu{n}^T\mathcal{T}_D) + (1-\sqrt{1-\epsilon^2}) (\vu{n} \vu{n}^T) \va{a}(\va{b}^T + \epsilon \; \vu{n}^T\mathcal{T}_D) + \epsilon \; \vu{n}(\va{b}^T + \epsilon \; \vu{n}^T\mathcal{T}_D)}{(1 + \epsilon \; \vu{n}^T \va{a})^2}
\end{align}
\end{widetext}
We prove one direction by setting $\epsilon = 1$ and showing the equality. Note that $\vec{a}' = \hat{n}$ so that the outer product
\begin{align}
    \vec{a}' \vec{b}'^T = \dfrac{\hat{n} \va{b}^T + \hat{n} \hat{n}^T \mathcal{T}_D}{1 + \; \vu{n}^T \va{a}}.
\end{align}
Plugging in 1 for $\epsilon$ into our definition of $\mathcal{T}_D'$ gives exactly this matrix. Thus, whenever $\epsilon = 1$, we see that $\mathcal{T}_D' = \vec{a}' \vec{b}'^T$ and hence we a have product state. This is true for \textit{all} density operators. Now we must prove the converse to complete the theorem. 

We see that whenever $\mathcal{T}_D' = \vec{a}' \vec{b}'^T$, then
\begin{align}
    &[\sqrt{1-\epsilon^2}+(1-\sqrt{1-\epsilon^2})\hat{n}\hat{n}^T][\eta \mathcal{T}_D - \vec{a} \vec{b}^T - \epsilon \vec{a} \hat{n}^T \mathcal{T}_D] \nonumber \\
    = &\epsilon \hat{n}[(1-\eta)\vec{b}^T + \epsilon \hat{n}^T \mathcal{T}_D],
\end{align}
where $\eta = 1 + \epsilon \; \hat{n}^T \vec{a}$ is the normalization factor. Then
\begin{align}\label{eq:correlcalcConverse2}
    &[\sqrt{1-\epsilon^2}+(1-\sqrt{1-\epsilon^2})\hat{n}\hat{n}^T][(1 + \epsilon \; \hat{n}^T \vec{a}) \mathcal{T}_D\notag\\ 
    &- \vec{a} \vec{b}^T - \epsilon \vec{a} \hat{n}^T \mathcal{T}_D] \nonumber
    = \epsilon^2 \hat{n}[-(\hat{n}^T \vec{a})\vec{b}^T + \hat{n}^T \mathcal{T}_D],
\end{align}
where we substituted for $\eta$.
If we prove that $\epsilon = 1$ is the only solution for any $\vec{a}$ and $\vec{b}$, then the result follows by implication. Substituting $\chi=\mathcal{T}_D-\vec{a} \vec{b}^T$ into Eq.~\eqref{eq:correlcalcConverse2} we have
\begin{align}
    &[\sqrt{1-\epsilon^2}+(1-\sqrt{1-\epsilon^2})\hat{n}\hat{n}^T][\chi+\epsilon \; \hat{n}^T \vec{a}\mathcal{T}_D - \epsilon \vec{a} \hat{n}^T \mathcal{T}_D] \nonumber \\
    &= \epsilon^2 \hat{n} \hat{n}^T\chi.
\end{align}
Using the fact that $\hat{n}^T\vec{a}$ is a number, we get
\begin{align}\label{eq:correlcalcConverse4}
    &\sqrt{1-\epsilon^2}(\chi+\epsilon \; \hat{n}^T \vec{a}\mathcal{T}_D - \epsilon \vec{a} \hat{n}^T \mathcal{T}_D)+(1-\sqrt{1-\epsilon^2})\hat{n}\hat{n}^T\chi \nonumber \\
    &= \epsilon^2 \hat{n} \hat{n}^T\chi.
\end{align}
We can separate out terms in the matrices by projecting onto $\hat{n}\hat{n}^T.$ Acting on both sides of Eq.~\eqref{eq:correlcalcConverse4} with $\hat{n}\hat{n}^T,$ we get
\begin{align}
    &\sqrt{1-\epsilon^2}\hat{n}\hat{n}^T\chi+(1-\sqrt{1-\epsilon^2})\hat{n}\hat{n}^T\chi =\epsilon^2 \hat{n} \hat{n}^T\chi\\
    &\rightarrow\notag\\
    &\label{eq:projectedOntoN}(1-\epsilon^2) \hat{n} \hat{n}^T \chi=\overleftrightarrow{0},
\end{align}
where $\overleftrightarrow{0}$ is the zero matrix. 

Eq.~\eqref{eq:projectedOntoN} can be satisfied when (i): $\epsilon^2=1$, (ii): $\hat{n}=\vec{0}$, (iii): $\hat{n}^T\chi=\vec{0}^T$, or (iv): $\chi=\overleftrightarrow{0}$, where $\vec{0}$ is a column vector of all zeros. We can discard (ii) because it simplifies to (i) or (iv) due to Eq.~\eqref{eq:correlcalcConverse4}. We can also discard (iv) since it implies that the initial state is uncorrelated. Condition (i) is the case we want to prove so we only need to examine (iii). 

Substituting (iii): $\hat{n}^T\chi=\vec{0}$ into Eq.~\eqref{eq:correlcalcConverse4}, we get
\begin{align}\label{eq:orthoNandX1}
    \sqrt{1-\epsilon^2}(\chi+\epsilon\hat{n}^T\vec{a}\mathcal{T}_D- \epsilon \vec{a} \hat{n}^T \mathcal{T}_D)=\overleftrightarrow{0}.
\end{align}
Substituting for $\mathcal{T}_D=\chi+\vec{a}\vec{b}^T$ into Eq.~\eqref{eq:orthoNandX1}, we get
\begin{align}
    &\sqrt{1-\epsilon^2}[\chi+\epsilon\hat{n}^T\vec{a}(\chi+\vec{a}\vec{b}^T)- \epsilon \vec{a} \hat{n}^T (\chi+\vec{a}\vec{b}^T)]=\overleftrightarrow{0}\\
    &\rightarrow\notag\\
    &\label{eq:orthoNandX3}\sqrt{1-\epsilon^2}[\chi+\epsilon\hat{n}^T\vec{a}\chi- \epsilon \vec{a} \hat{n}^T \chi]=\overleftrightarrow{0},
\end{align}
where on the last equation we used the fact that $\hat{n}^T\vec{a}$ is a number. Using again condition (iii), Eq.~\eqref{eq:orthoNandX3} simplifies to
\begin{align}
    &\sqrt{1-\epsilon^2}[(1+\epsilon\hat{n}^T\vec{a})\chi]=\overleftrightarrow{0}\\
    &\rightarrow\notag\\
    &\epsilon\hat{n}^T\vec{a}=-1,
\end{align}
where the last equation comes from the fact that we already considered the other solutions. From the definitions of $\hat{n}$ and $\vec{a}$, the maximum value of $\abs{\hat{n}^T\vec{a}}$ is 1. Thus, it is necessary that $\epsilon=\pm 1$ to reach an uncorrelated output state for an initially correlated input state and the result follows. The results also hold for $M_-$ because, from the definitions in Eq. \eqref{Eq:special_weak_measurement_apdx}, we have that $M_-$ is equal to $M_+$, but with $\hat{n}$ pointing in the opposite direction.
\end{proof}

\section{Proof of Lemma \ref{lemma:weakBrunPure}}
\begin{proof}
Let the two qubit initial pure state  be
\begin{align}\label{eq:initArb2QPureState}
    \ket{\psi}=a\ket{00}+b\ket{01}+c\ket{10}+d\ket{11}.
\end{align} 
The concurrence of the initial state $\ket{\psi}$ is
\begin{align}\label{eq:concAD-BC}
    C(\ket{\psi})=2\abs{ad-bc}.
\end{align}
Since our measurement operators commute, we can think of the problem as $n$ rounds and we choose $m$ of these rounds to be measurement outcome $M_0$. The order that we pick which rounds to be $M_0$ does not matter. Thus, there are $\binom{n}{m}$ possible outcomes for $m$ rounds of $M_0$ measurement outcomes and the mixed post measurement state is
\begin{align}\label{eq:postStatePsi}
    \psi_n=\sum_{m=0}^{n}\binom{n}{m}(M_0^m M_1^{n-m}\otimes\mathbb{I})\op{\psi}(M_0^m M_1^{n-m}\otimes\mathbb{I}).
\end{align}
The concurrence of a two qubit mixed state $\rho$ is given by $\text{max}\{0,\lambda_1-\lambda_2-\lambda_3-\lambda_4\}$, where the $\lambda_i$s are the square roots of the eigenvalues in decreasing order of the matrix $\rho(\sigma_y\otimes\sigma_y)\rho^*(\sigma_y\otimes\sigma_y)$ \cite{Wooters_1998EntOfFormationOfAnArbitraryStateOfTwoQubits}. The concurrence of Eq.~\eqref{eq:postStatePsi} is
\begin{align}\label{eq:concPostMixed}
   C(\psi_n)= (1-\epsilon^2)^{n/2} \cdot C(\psi).
\end{align}

The only time this value is equal to 0 is when a strong projective measurement ($\epsilon = 1$) is applied to the first qubit or when $n \rightarrow \infty$ for $\epsilon < 1$.

\end{proof}

\section{Proof of Theorem \ref{thm:specialWeakMeasEntOneSys}}
\begin{proof}
Let the initial two qubit pure state be
\begin{align}
    \ket{\psi}=a\ket{00}+b\ket{01}+c\ket{10}+d\ket{11}.
\end{align} 
Notice that the operators $M_+$ and $M_-$ commute. 
Using a local unitary $V$, we can rotate the vector $\hat{n}$ such that $n_z=1$. This puts $M_+$ to $A_0$ and $M_-$ to $A_1$. 
Since local unitaries do not affect entanglement, consequently from Lemma 2, we get the same entanglement formula
\begin{align}\label{Eq:mps_concurrence}
   C(\psi_n)= (1-\epsilon^2)^{n/2} \cdot C(\psi). 
\end{align}


These values are nonzero for initial nonzero concurrence, $0\leq\epsilon<1$, and finite $n$.
\end{proof}


\section{Proof of Corollary \ref{coro:specialWeakUknownBothSys}}
\begin{proof}
    Let the initial two qubit pure state be
    \begin{align}
        \ket{\psi}=a\ket{00}+b\ket{01}+c\ket{10}+d\ket{11}.
    \end{align} 
    
    For the case where we do not know the measurement outcomes, it is easy to calculate the mixed state concurrence in terms of the initial entanglement of the pure state when we use the Schmidt form of $\ket{\psi}$. We use superscripts on the measurement operators to allow for different measurements on the first and second qubits. The final state will have the form
    \begin{widetext}
    \begin{align}\label{eq:measureBothSys1}
        \psi'=&\sum_{m,x=0}^{m=n,x=y}\binom{n}{m}\binom{y}{x}\left[(M_+^{(1)})^m(M_-^{(1)})^{n-m}\otimes (M_+^{(2)})^x(M_-^{(2)})^{y-x}\right]\op{\psi}\left[(M_+^{(1)})^m(M_-^{(1)})^{n-m}\otimes(M_+^{(2)})^x(M_-^{(2)})^{y-x}\right]\\
        &\notag\rightarrow\\
        \notag\psi'=&W\otimes V\sum_{m,x=0}^{m=n,x=y}\binom{n}{m}\binom{y}{x}\\
        &\left[(A_0^{(1)})^m(A_1^{(1)})^{n-m}\otimes (A_0^{(2)})^x(A_1^{(2)})^{y-x}\right]\op{\psi}\left[(A_0^{(1)})^m(A_1^{(1)})^{n-m}\otimes(A_0^{(2)})^x(A_1^{(2)})^{y-x}\right]W^\dagger\otimes V^\dagger,
    \end{align}
    \end{widetext}
    where $W$ and $V$ are local unitaries and we use the fact that local unitaries do not change the entanglement. $\psi'$ has a final concurrence of
    \begin{align}
        C(\psi') = (1-\epsilon_1^2)^{n/2} (1-\epsilon_2^2)^{y/2} C(\psi).
    \end{align}
    Note that the entanglement of $\psi'$ only goes to zero when perfect projective measurements are applied by either party or when either party repeats their measurements an infinite amount of times for non-projective measures. The values $\epsilon_1$ and $\epsilon_2$ represent the strength of the measurements applied to systems 1 and 2, respectively.
\end{proof}

\section{Interesting Example of Destroying Entanglement with Unknown Measurement Outcomes}

Define our local measurements to be
\begin{align}
    M_{\pm} = \sqrt{\dfrac{1\mp\epsilon}{2}} \left( \mathbb{I} \pm \epsilon(\mathbb{I}+\hat{\sigma}_x) \right)
\end{align} 
and the local pure state to have the Schmidt form of
\begin{align}
    \ket{\psi} = \text{cos}(\theta/2)\ket{00} + \text{sin}(\theta/2)\ket{11},
\end{align}
where $0\leq \epsilon\leq1$ and $0\leq\theta\leq \pi$.

Notice that the final concurrence, after a measurement with an unknown outcome, is given by 0 whenever $\epsilon = 1/\sqrt{2}$. The measurements have a determinant of $\pm 1/\sqrt{2}$ and are not projective. 
The final state has the form
\begin{align}
    \dfrac{1}{4}
    \left(
    \begin{array}{cccc}
        1+\text{cos}(\theta) & 0 & 0 & \text{sin}(\theta) \\
        0 & 1-\text{cos}(\theta) & \text{sin}(\theta) & 0 \\
        0 & \text{sin}(\theta) & 1+\text{cos}(\theta) & 0 \\
        \text{sin}(\theta) & 0 & 0 & 1-\text{cos}(\theta)
    \end{array}
    \right), \nonumber
\end{align}
which has a partial transpose with a determinant of 0.

It turns out that the entanglement is broken for \textit{any} pure state only if $\epsilon = 1/\sqrt{2}$. Note that each possible measurement outcome is entangled and the final state is one of these outcomes. The entanglement that is broken is the entanglement of the final state that we can \textit{assign}, which is limited by our ignorance.

%

\vfill

\small
\framebox{\parbox{\linewidth}{
The submitted manuscript has been created by UChicago Argonne, LLC, Operator of 
Argonne National Laboratory (``Argonne''). Argonne, a U.S.\ Department of 
Energy Office of Science laboratory, is operated under Contract No.\ 
DE-AC02-06CH11357. 
The U.S.\ Government retains for itself, and others acting on its behalf, a 
paid-up nonexclusive, irrevocable worldwide license in said article to 
reproduce, prepare derivative works, distribute copies to the public, and 
perform publicly and display publicly, by or on behalf of the Government.  The 
Department of Energy will provide public access to these results of federally 
sponsored research in accordance with the DOE Public Access Plan. 
http://energy.gov/downloads/doe-public-access-plan.}}

\end{document}